\documentclass{article}

\addtolength{\textheight}{2.5 cm}
\addtolength{\textwidth}{1 cm}
\addtolength{\voffset}{-1 cm}
\addtolength{\hoffset}{-1 cm}

\bibliographystyle{plain}
\title{Topological containment of the 5-clique minus an edge in 4-connected graphs\footnote{This research was funded in part by Australian Research Council Discovery Project DP130100300.}}
\author{{Rebecca Robinson and Graham Farr}\\
	[1ex] Clayton School of Information Technology\\
	Monash University\\
	Clayton, Victoria, 3800\\
	Australia\\[1ex]
	Rebecca.Robinson@monash.edu, Graham.Farr@monash.edu
	}

\date{May 3, 2017}

\usepackage[dvips]{graphicx}
\usepackage{amsmath, amsthm, amssymb}
\usepackage{mathtools}
\usepackage{url}
\usepackage{float}

\newtheorem{thm}{Theorem}

\newtheorem{claim}{Claim}

\begin{document}
\maketitle

\begin{abstract}
The topological containment problem is known to be polynomial-time solvable for any fixed pattern graph $H$, but good characterisations have been found for only a handful of non-trivial pattern graphs. The complete graph on five vertices, $K_5$, is one pattern graph for which a characterisation has not been found. The discovery of such a characterisation would be of particular interest, due to the Haj\'{o}s Conjecture. One step towards this may be to find a good characterisation of graphs that do not topologically contain the simpler pattern graph $K_5^-$, obtained by removing a single edge from $K_5$.

This paper makes progress towards achieving this, by showing that every 4-connected graph must contain a $K_5^-$-subdivision.
\end{abstract}

\section{Introduction}

The topological containment relation for graphs goes back to Kuratowski's characterisation of planar graphs as those that do not contain subdivisions of either $K_5$ or $K_{3,3}$ \cite{Kuratowski30}. Since then, topological containment has been used to obtain further characterisations, such as for outerplanar graphs \cite{Chartrand67} and series-parallel graphs \cite{Dirac52, Duffin65}. It is known, thanks to Robertson and Seymour \cite{R&S85, R&S95}, that the question of whether a graph contains an $H$-subdivision is always solvable in polynomial time for fixed $H$. More recently, Grohe and Marx \cite{Grohe11, Grohe11_2} established that the problem is fixed-parameter tractable, with parameter $H$. Nonetheless, good characterisations have been found for only a few non-trivial pattern graphs $H$: $K_{1,n}$, $K_3$ (elementary), $C_4$, $C_5$ \cite{SeymourWalton81}, $C_6$, $C_7$ \cite{FarrThesis86}, $K_4$ \cite{Dirac52, Duffin65}, $K_{3,3}$ \cite{Wagner37, Hall43}, $W_4$, $W_5$ \cite{Farr88}, $W_6$ \cite{Robinson08}, and $W_7$ \cite{Robinson14} (where $W_n$ denotes the wheel with $n$ spokes), and more recently, the Wagner graph, also known as $V_8$ \cite{MaharryRobertson16}.

One pattern graph of particular interest in the context of this problem is $K_5$, due to its connections with the Haj\'{o}s Conjecture. It was conjectured by Haj\'{o}s in the 1940s that any graph with no $K_k$-subdivision is $(k-1)$-colourable. This is known for $k \le 4$ \cite{Hadwiger43, Dirac52}, and has been refuted for $k \ge 7$ \cite{Catlin79}, but for $k = 5$ and $k = 6$ the conjecture remains open. A good characterisation for graphs containing no $K_5$-subdivision may help solve this problem for the $k = 5$ case. While such a characterisation has not been found, a handful of results have been found that lead to further understanding of the structure of graphs with no $K_5$-subdivision. Mader \cite{Mader98-2} showed that every simple graph on $n \ge 5$ vertices and with at least $3n - 5$ edges contains a $K_5$-subdivision, thus proving an early conjecture of Dirac \cite{Dirac52}. Seymour (unpublished), and separately, Kelmans \cite{Kelmans79}, conjectured that every 5-connected nonplanar graph contains a $K_5$-subdivision. This conjecture has been established for graphs containing $K_4$ minus an edge \cite{MaYu2010, MaYu2013}, and for graphs containing $K_{2,3}$ in \cite{Kawara2015}. A proof of the Kelmans-Seymour Conjecture for all 5-connected nonplanar graphs has recently been announced in \cite{HeWangYu2015, HeWangYu2016, HeWangYu2016-III, HeWangYu2016-IV}.

A step along the way to characterising graphs with no $K_5$-subdivision is to solve the problem for a slightly simpler graph, $K_5^{-}$, which we obtain by removing a single edge from $K_5$. In this paper, we take a step in this direction, showing that every 4-connected graph contains a $K_5^{-}$-subdivision. The approach used is similar to that in \cite{Robinson08, Robinson14}, where firstly a `base' graph is identified for which a good characterisation of topological containment is already known, and which is a subgraph of the pattern graph $H$ (here, $H = K_5^{-}$), then we look at all possible ways of enlarging this base graph so that the conditions of the hypothesis are met (in this case, 4-connectivity). For each enlarged graph, we determine whether this graph contains an $H$-subdivision. Here, the base graph chosen is $W_4$, firstly since we know as a consequence of the characterisation in \cite{Farr88} that any 4-connected graph must contain a $W_4$-subdivision, and secondly since $W_4$ differs from $K_{5}^{-}$ only by a single edge.

A complete characterisation for $K_5^{-}$ could potentially form the basis for a similarly structured proof characterising graphs with no $K_5$-subdivision.

This approach necessarily involves checking many specific graphs for the presence of a subdivision of the pattern graph $H$.  These graphs are not large, so that each check is easy to do.  We have found that the number of these checks grows rapidly as the pattern graph increases in size.  For some small $H$, the checking can be done by hand, as for $W_4$ and $W_5$ \cite{Farr88} and many others in the above list.  For others, computer assistance is needed, as for $W_6$ and $W_7$ in \cite{Robinson08, Robinson14}.  In this paper, with pattern graph $K_5^{-}$, we find that the amount of checking, although nontrivial, is within reach of manual verification, so that computer assistance is not necessary.

The result given in this paper can also be considered as a step in parallel to the Kelmans-Seymour Conjecture, giving insight into the topological structure of 4-connected graphs in much the same way as the Kelmans-Seymour Conjecture does for 5-connected graphs. In fact, a family of related results can be observed, for graphs of increasing connectivity:

\begin{itemize}
\item 2-connectivity implies topological containment of $K_3$;
\item 3-connectivity implies topological containment of $K_4$;
\item 3-connectivity with some vertex of degree $\ge 4$ implies topological containment of $W_4$ \cite{Farr88};
\item 4-connectivity implies topological containment of $K_5^{-}$ (this paper);
\item 5-connectivity implies topological containment of $K_5$ (Kelmans-Seymour Conjecture).
\end{itemize}

An interesting result relating to minor containment is due to Halin and Jung \cite{HalinJung63}, who show that every 4-connected graph contains $K_5$ or $K_{2,2,2}$ as a minor. Note that subgraphs contractible to $K_5$ or $K_{2,2,2}$ do not necessarily contain a $K_5^-$-subdivision.

\section{Some definitions}
The \emph{neighbourhood} $N_{G}(v)$ of a vertex $v$ in $G$ is the set of vertices which are adjacent to $v$ in $G$.

Given a path $P$ where $x, y \in V(P)$, we denote by $xPy$ the subpath of $P$ between $x$ and $y$, including $x$ and $y$. A \emph{proper subpath} of $P$ is a subpath of $P$ other than $P$ itself.

If $H$ is a subgraph of some graph $G$, an \emph{$H$-bridge} in $G$ is a subgraph of $G$ which is either an edge not in $H$ but with both ends in $H$ (an \emph{inner} $H$-bridge), or a connected component of $G - V(H)$ together with all edges (including their endvertices) that join this component to $H$ (an \emph{outer} $H$-bridge). This definition is from \cite{Mohar01, Tutte66}.

Let $U$ be an $H$-bridge of $G$. A \emph{vertex of attachment} is a vertex in $V(U)\cap V(H)$. An edge of $U$ incident with a vertex of attachment is a \emph{foot} of $U$. This terminology is also used in \cite{Mohar01, Tutte66}.

\vspace{0.1in}
\noindent \textbf{Wheel terminology}

\begin{itemize}
\item The \emph{hub} of a $W_n$-subdivision is the vertex of degree $n$ in that wheel subdivision.
\item The \emph{rim} of a  $W_{n}$-subdivision is the cycle around the outside of that wheel subdivision (excluding the hub).
\item The \emph{spoke-meets-rim vertices} of a $W_{n}$-subdivision are the $n$ vertices of degree 3 in that wheel subdivision.
\item The \emph{spokes} of a  $W_{n}$-subdivision are the $n$ paths from the hub to the spoke-meets-rim vertices in that wheel subdivision. They each meet the rim only at one spoke-meets-rim vertex, and are vertex-disjoint except at the hub.
\item The \emph{segments} of the rim in a  $W_{n}$-subdivision are the $n$ paths that form subpaths of the rim, such that each segment has two spoke-meets-rim vertices as endpoints, and does not contain any spoke-meets-rim vertices internally.
\item Two spokes of a wheel subdivision are said to be \emph{neighbouring} spokes if their spoke-meets-rim vertices have only a single rim segment between them.
\end{itemize}

Let $P_i$ be some spoke of a  $W_{n}$-subdivision $H$. An \emph{initial segment} of $P_i$ is some subpath of $P_i$ that has as one of its endpoints the hub of $H$. A \emph{proper initial segment} of $P_i$ is an initial segment of $P_i$ that is also a proper subpath of $P_i$.

Let $H$ be a $W_n$-subdivision in a graph $G$. We say that another $W_n$-subdivision $J$ in $G$ is \emph{shorter than} $H$ if:

\begin{itemize}
\item the hubs of $H$ and $J$ are the same;
\item the spokes of $H$ and $J$ are not all the same;
\item each spoke of $J$ is an initial segment of a spoke of $H$ (that is, for each spoke $P_i$ of $H$, there exists a vertex $w_i$ on $P_i$ such that $vP_iw_i$ is a spoke of $J$, where $v$ is the hub of both $H$ and $J$); and
\item at least one spoke of $J$ is a \emph{proper} initial segment of a spoke of $H$.
\end{itemize}

If no other $W_n$-subdivision in $G$ is shorter than $H$, then we say that $H$ is \emph{short}.

A short $W_n$-subdivision $H$ need not necessarily have minimum sum of spoke lengths, over all $W_n$-subdivisions in $G$.  But any $W_n$-subdivision with lower total spoke length than $H$ must have some spoke that is not an initial segment of a spoke of $H$.  Also, any $W_n$-subdivision of minimum total spoke length must be short.

It is clear that, if $G$ has a $W_n$-subdivision, then it has a short $W_n$-subdivision.

\vspace{0.1in}
\noindent \textbf{$K_{5}^{-}$-subdivision terminology}
\begin{itemize}
\item The \emph{trivertices} of a $K_{5}^{-}$-subdivision are the vertices of degree 3 in that subdivision.
\item The \emph{tetravertices} of a $K_{5}^{-}$-subdivision are the vertices of degree 4 in that subdivision.
\item The \emph{terminal vertices} of a $K_{5}^{-}$-subdivision $H$ are all vertices that are trivertices or tetravertices of $H$.
\end{itemize}

\section{Result}

\begin{thm}
\label{4conn}
Let $G$ be a 4-connected graph. $G$ contains a $K_{5}^{-}$-subdivision.
\end{thm}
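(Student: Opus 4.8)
The plan is to build the $K_5^{-}$-subdivision starting from a $W_4$-subdivision, which is guaranteed to exist by the characterisation in \cite{Farr88}: since $G$ is 4-connected, every vertex has degree at least 4, so $G$ is 3-connected with a vertex of degree $\ge 4$, and hence contains a $W_4$-subdivision. The key observation is that $W_4$ is exactly $K_5^{-}$ minus one edge: if we label the hub $v$ and the four spoke-meets-rim vertices $u_1, u_2, u_3, u_4$ in cyclic order around the rim, then $K_5^{-}$ is obtained by adding a single ``diagonal'' path joining two opposite rim vertices, say $u_1$ and $u_3$, internally disjoint from the existing subdivision. The hub $v$ then becomes the tetravertex (degree 4), the two endpoints $u_1, u_3$ of the new diagonal become the other two tetravertices, and $u_2, u_4$ remain trivertices. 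So the whole problem reduces to producing such a diagonal path.

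First I would fix a \emph{short} $W_4$-subdivision $H$ (in the sense defined above), which exists whenever any $W_4$-subdivision does; shortness will be the crucial minimality condition exploited throughout. I would then analyse the $H$-bridges of $G$, using 4-connectivity to control their attachment patterns. The aim is to find either a single $H$-bridge, or a short chain of bridges and rim-segments, that yields a path between two opposite spoke-meets-rim vertices (or something that can be rerouted into one) avoiding the hub and the interiors of the spokes. Because $G$ is 4-connected, no $H$-bridge can attach at fewer than, roughly speaking, enough places to be ``trapped'', and in particular the rim and hub together cannot form a small cut; this should force the existence of bridges reaching across the wheel. The shortness of $H$ is what prevents a bridge from simply attaching high up on a spoke in a way that would let us shorten $H$ rather than extend it.

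The heart of the argument, and the main obstacle, is the case analysis on how the $H$-bridges attach. I expect to organise this by the locations and number of feet of a chosen bridge $U$: whether $U$ attaches to two opposite rim vertices (the easy case, giving the diagonal immediately), to two adjacent rim vertices, to interior points of segments, to the spokes, or to the hub. In each configuration one must either directly extract the desired opposite-vertices diagonal, or reroute parts of $H$ — for instance, absorbing part of a bridge into a spoke or rim segment to relocate a spoke-meets-rim vertex — so as to reduce to a configuration already handled, while checking that the rerouting does not destroy the $W_4$-structure or contradict shortness. The delicate and laborious part is ensuring the case distinction is exhaustive and that 4-connectivity genuinely rules out every configuration in which \emph{no} diagonal can be found; this is where the ``many specific graphs'' alluded to in the introduction must each be checked for a $K_5^{-}$-subdivision. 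Throughout, I would lean on the short $W_4$-subdivision to guarantee that any attachment of a bridge to the interior of a spoke can be pushed back toward the hub, eliminating would-be counterexamples by the minimality of $H$.
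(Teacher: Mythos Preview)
Your outline matches the paper's approach: start from a short $W_4$-subdivision and produce the missing diagonal via case analysis on attachments, with 4-connectivity supplying the extra paths and shortness ruling out spoke-interior attachments. The paper organises the primary case split by fixing a path $P$ from $v_1$'s fourth neighbour into $H-v_1$ and splitting on where $P$ lands (five cases), then in the hard cases introducing a second path from the opposite vertex $v_3$ and further auxiliary paths $R$ between two halves of the configuration --- precisely the ``short chain of bridges and rim-segments'' you anticipate.
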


\begin{proof}
By Theorem 1 and Lemma 2 in \cite{Farr88}, there exists a $W_{4}$-subdivision in $G$. Let $H$ be a $W_4$-subdivision in $G$, chosen such that $H$ is \emph{short}.

Let $v$ be the hub of $H$, and let $v_1, \ldots, v_4$ be the spoke-meets-rim vertices in order around the rim of $H$. Let $P_1, \ldots, P_4$ be the spokes of $H$ with endpoints $v_1, \ldots, v_4$ respectively. Let $R_1, \ldots, R_4$ be the rim segments in order around the rim of $H$, starting at $v_1$, with $R_i$ going from $v_i$ to $v_{i+1}$, $i\le 3$, and $R_4$ going from $v_4$ to $v_1$.

By the 4-connectivity of $G$, there must exist some fourth neighbour $u_1$ of $v_1$, such that $u_1 \notin N_{H}(v_1)$. Also to preserve 4-connectivity, there must be at least three paths from $u_1$ to $H - v_1$, disjoint except at $u_1$, that meet $H$ only at their endpoints in $H - v_1$. Let one such path be called $P'$, and let $P$ be the path defined by $P' + v_1u_1$.

Let $p_1$ be the vertex at which $P$ meets $H$.  Consider the possibilities for this, which we group into five cases, (a) -- (e):

\begin{itemize}
\item[(a)] $p_1$ is an internal vertex of $P_2$ or $P_4$;
\item[(b)] $p_1 = v_3$;
\item[(c)] $p_1$ is an internal vertex of $R_2$ or $R_3$;
\item[(d)] $p_1$ is an internal vertex of $P_3$;
\item[(e)] $p_1$ lies on $R_1$, $R_4$, or $P_1$.
\end{itemize}

We treat each of these cases in turn. The first two are easily dealt with. The third, (c), is the most complex.


\vspace{0.1in}
\noindent \textbf{Case (a): $p_1$ is an internal vertex of $P_2$ or $P_4$.}

Assume (without loss of generality, due to symmetry) that $p_1$ is an internal vertex of $P_2$. Then there exists a new $W_4$-subdivision, $H'$, centred on $v$, with spoke-meets-rim vertices $v_1$, $v_4$, $v_3$, and $p_1$. This contradicts our definition of $H$ as short.


\vspace{0.1in}
\noindent \textbf{Case (b): $p_1 = v_3$.}

If $p_1 = v_3$, then a $K_{5}^{-}$-subdivision exists in $G$, with trivertices $v_2$ and $v_4$ and tetravertices $v$, $v_1$, and $v_3$.


\vspace{0.1in}
\noindent \textbf{Case (c): $p_1$ is an internal vertex of $R_2$ or $R_3$.}

Assume (without loss of generality, due to symmetry) that $p_1$ is an internal vertex of $R_2$. Assume also that $P$ is chosen to minimise the distance between $p_1$ and $v_3$ along $R_2$.

By the 4-connectivity of $G$, there must exist some fourth neighbour $u_3$ of $v_3$, such that $u_3 \notin N_{H}(v_3)$. Let $U_3$ be the $(H\cup P)$-bridge of $G$ containing the edge $v_3u_3$.

If $v_1$ is a vertex of attachment of $U_3$, then there exists a path from $v_1$ to $v_3$ such that the graph falls into Case (b), and a $K_{5}^{-}$-subdivision is easily formed. Assume then that $v_1$ is not a vertex of attachment of $U_3$.

If any of $U_3$'s vertices of attachment lie on $R_1 - v_1$, on $v_2R_2p_1 - p_1$, internally on $P$, or internally on $P_1$, then a $K_{5}^{-}$-subdivision can be formed, as shown in the four graphs of Figure \ref{case-c-1}. Note that the graphs illustrated represent graphs that are contained in $G$ as a subdivision; as such, a single edge in these graphs represents a path in $G$.

Assume then that $U_3$ does not have vertices of attachment on $R_1 - v_1$, $v_2R_2p_1 - p_1$, internally on $P$, or internally on $P_1$.

\begin{figure}[H]
\begin{center}
\includegraphics[width=0.8\textwidth]{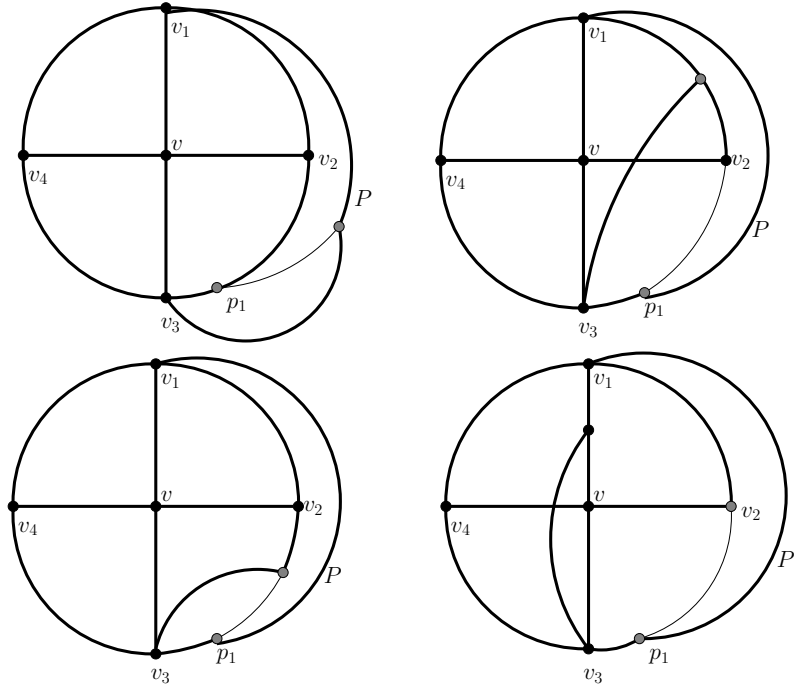}
\caption[Case (c): $K_5^{-}$-subdivisions created by paths in $U_3$.]{Case (c). A $K_5^{-}$-subdivision exists if $U_3$ has vertices of attachment internally on $P$, on on $R_1 - v_1$, on $v_2R_2p_1 - p_1$, or internally on $P_1$. In each case, the edges of the $K_5^{-}$-subdivision are shown in bold, and non-terminal vertices are grey.}
\label{case-c-1}
\end{center}
\end{figure}

If any of $U_3$'s vertices of attachment are internal vertices of $P_2$ or $P_4$, then there exists some $W_4$-subdivision centred on $v$ such that $H$ is no longer short, which contradicts our choice of $H$.

It can be assumed, then, that $U_3$'s vertices of attachment lie only on the following paths:

\begin{itemize}
\item[(i)] $R_4$ (internally)
\item[(ii)] $p_1R_2v_3$, $R_3$, or $P_3$ (potentially at their endpoints)
\end{itemize}

These cases are considered below.

\vspace{0.1in}
\noindent \textbf{(i) $U_3$ has a vertex of attachment internally on $R_4$}

Suppose $U_3$ has a vertex of attachment that lies internally on $R_4$. Call this vertex $q_3$. Then there exists some path from $q_3$ to $v_3$ that is internally disjoint from $H\cup P$; call this path $Q$. See Figure \ref{case-ci}.

\begin{figure}[H]
\begin{center}
\includegraphics[width=0.5\textwidth]{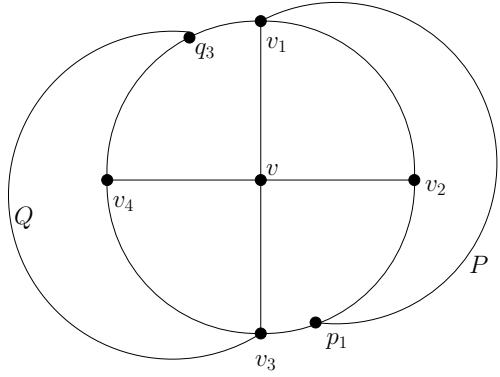}
\caption[Case (c)(i): Paths $P$ and $Q$.]{Case (c)(i). Path $Q$ from $v_3$ to $q_3$, where $q_3$ lies internally on $R_4$.}
\label{case-ci}
\end{center}
\end{figure}

Without loss of generality, assume that of all paths from $v_3$ to an internal vertex of $R_4$ that are internally disjoint from $H\cup P$, $Q$ is chosen such that the distance between $q_3$ and $v_1$ along $R_4$ is minimised.

Let $G_1 = R_1\cup R_2\cup P_1\cup P_2\cup P$. Let $G_2 = R_3\cup R_4\cup P_3\cup P_4\cup Q$. Observe that $V(G_1)\cap V(G_2) = \{v_1, v, v_3\}$. By 4-connectivity, there must exist some path $R$ from $G_1$ to $G_2$ that is disjoint from $\{v_1, v, v_3\}$ and meets $G_1\cup G_2$ only at its endpoints.

The endpoints of $R$ can be as follows:

\begin{tabular}{l l l}
1. $R_1$ to $R_3$ & 2a. $R_1$ to $v_4R_4q_3$ & 2b. $R_1$ to $q_3R_4v_1$ \\
3. $R_1$ to $P_3$ & 4. $R_1$ to $P_4$ & 5. $R_1$ to $Q$ \\
6a. $v_2R_2p_1$ to $R_3$ & 6b. $p_1R_2v_3$ to $R_3$ & 7a. $v_2R_2p_1$ to $v_4R_4q_3$ \\
7b. $p_1R_2v_3$ to $v_4R_4q_3$ & 7c. $v_2R_2p_1$ to $q_3R_4v_1$ & 7d. $p_1R_2v_3$ to $q_3R_4v_1$ \\
8a. $v_2R_2p_1$ to $P_3$ & 8b. $p_1R_2v_3$ to $P_3$ & 9a. $v_2R_2p_1$ to $P_4$ \\
9b. $p_1R_2v_3$ to $P_4$ & 10a. $v_2R_2p_1$ to $Q$ & 10b. $p_1R_2v_3$ to $Q$ \\
11. $P_1$ to $R_3$ & 12. $P_1$ to $R_4$ & 13. $P_1$ to $P_3$ \\
14. $P_1$ to $P_4$ & 15. $P_1$ to $Q$ & 16. $P_2$ to $R_3$ \\
17a. $P_2$ to $v_4R_4q_3$ & 17b. $P_2$ to $q_3R_4v_1$ & 18. $P_2$ to $P_3$ \\
19. $P_2$ to $P_4$ & 20. $P_2$ to $Q$ & 21. $P$ to $R_3$ \\
22a. $P$ to $v_4R_4q_3$ & 22b. $P$ to $q_3R_4v_1$ & 23. $P$ to $P_3$ \\
24. $P$ to $P_4$ & 25. $P$ to $Q$ \\
\end{tabular}

Table \ref{case-ci-table} shows how each case number listed above is assigned to each possible pairing of endpoints.

\begin{table}[H]
\begin{center}
\begin{tabular}{|p{1.2cm}||c|c|c|c|c|c|} \hline
\hfill $G_2$\newline $G_1$ & $R_3$ & $v_4R_4q_3$ & $q_3R_4v_1$ & $P_3$ & $P_4$ & $Q$ \\ \hline\hline
$R_1$ & 1 & 2a & 2b & 3 & 4 & 5 \\ \hline
$v_2R_2p_1$ & 6a & 7a & 7c & 8a & 9a & 10a \\ \hline
$p_1R_2v_3$ & 6b & 7b & 7d & 8b & 9b & 10b \\ \hline
$P_1$ & 11 & 12 & 12 & 13 & 14 & 15 \\ \hline
$P_2$ & 16 & 17a & 17b & 18 & 19 & 20 \\ \hline
$P$ & 21 & 22a & 22b & 23 & 24 & 25 \\ \hline
\end{tabular}
\caption[Case (c)(i). Table showing possible endpoints of $R$.]{Case (c)(i). Table showing assigned case numbers of all possible pairs of endpoints of $R$.}
\label{case-ci-table}
\end{center}
\end{table}

In cases 1, 2a, 4, 5, 6a, 7a, 10a, 16, 21, 22a, and 25, a $K_5^-$-subdivision exists. In cases 3, 8a, 8b, 9a, 9b, 11, 12, 14, 15, 17a, 17b, 18, 19, 20, 23, and 24, a $W_4$-subdivision centred on $v$ can be found such that $H$ is no longer short. This leaves only the following cases where $R$ either goes from $P_1$ to $P_3$ (case 13), or has an endpoint either in $p_1R_2v_3$ or $q_3R_4v_1$:

\begin{itemize}
\item[2b.] $R_1$ to $q_3R_4v_1$
	\item[6b.] $p_1R_2v_3$ to $R_3$
	\item[7b.] $p_1R_2v_3$ to $v_4R_4q_3$
\item[7c.] $v_2R_2p_1$ to $q_3R_4v_1$
	\item[7d.] $p_1R_2v_3$ to $q_3R_4v_1$
	\item[10b.] $p_1R_2v_3$ to $Q$
		\item[13.] $P_1$ to $P_3$
\item[22b.] $P$ to $q_3R_4v_1$
\end{itemize}

Assume then that \emph{any} path from $G_1$ to $G_2$ meeting the criteria for $R$ (disjoint from $\{v_1, v, v_3\}$, and meets $G_1\cup G_2$ only at its endpoints) falls into one of these eight cases.

\textbf{Suppose firstly that case 13 holds,} that is, there exists such a path $R$ from $P_1$ to $P_3$. Now, let $G'_1 = G_1 - P_1$, and let $G'_2 = G_2 - P_3$. By 4-connectivity, there must exist some path $R'$ from $G'_1$ to $G'_2$ that is disjoint from $\{v_1, v, v_3\}$ and meets $G'_1\cup G'_2$ only at its endpoints. (It may or may not meet $R$.) If this path meets $P_1$ or $P_3$, then either it falls into one of the cases already dealt with above (one of cases 3, 8a, 8b, 11, 12, 14, 15, 18, or 23), or $R'$ has a subpath, internally disjoint from $G_1\cup G_2$, with one of the following pairs of endpoints:

\begin{tabular}{l l}
(i) $P_1$ and $R_1$ & (ii) $P_1$ and $R_2$ \\
(iii) $P_1$ and $P_2$ & (iv) $P_1$ and $P$ \\
(v) $P_3$ and $R_3$ & (vi) $P_3$ and $R_4$ \\
(vii) $P_3$ and $P_4$ & (viii) $P_3$ and $Q$ \\
\end{tabular}

In each of these eight cases, a $W_4$-subdivision centred on $v$ can be created such that $H$ is no longer short (regardless of whether or not the subpath meets the path $R$ from $P_1$ to $P_3$). Figure \ref{case-ci-13} shows two examples of such graphs, from endpoint pairs (i) and (ii) respectively, with the subpath of $R'$ shown in blue, and the shorter-spoked $W_4$-subdivision in bold.

\begin{figure}[H]
\begin{center}
\includegraphics[width=\textwidth]{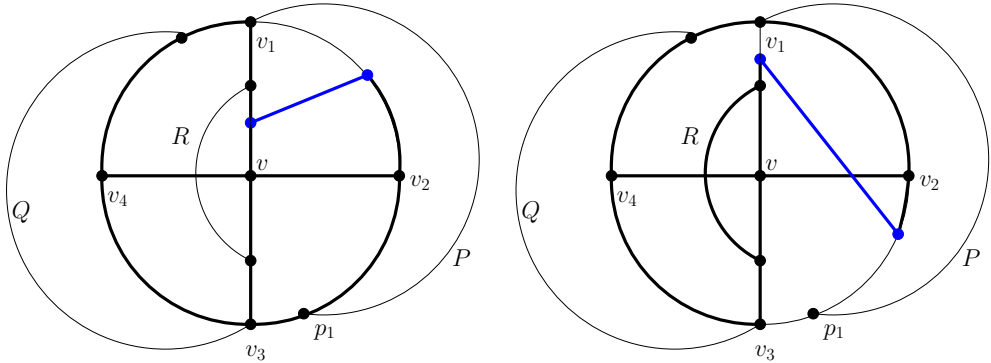}
\caption[Case (c)(i), 13. $W_4$-subdivisions with shorter spokes than $H$.]{Case (c)(i); case 13 of path $R$. Examples of how a subpath of $R'$ forms a $W_4$-subdivision with shorter spokes than $H$.}
\label{case-ci-13}
\end{center}
\end{figure}

Assume, then, that any such path $R'$ from $G'_1$ to $G'_2$ does not meet $P_1$ or $P_3$. The set of all possible paths $R'$ is a subset of those paths meeting the criteria for $R$ (excluding those that have $P_1$ or $P_3$ as an endpoint). We have already shown that many of the paths in this subset result either in a $K_5^-$-subdivision, or in a violation of the shortness of $H$. We assume, then, that the endpoints of $R'$ are among those remaining possibilities where this is not the case:

\begin{itemize}
\item[2b.] $R_1$ to $q_3R_4v_1$
	\item[6b.] $p_1R_2v_3$ to $R_3$
	\item[7b.] $p_1R_2v_3$ to $v_4R_4q_3$
\item[7c.] $v_2R_2p_1$ to $q_3R_4v_1$
	\item[7d.] $p_1R_2v_3$ to $q_3R_4v_1$
	\item[10b.] $p_1R_2v_3$ to $Q$
\item[22b.] $P$ to $q_3R_4v_1$
\end{itemize}

These possibilites for the placement of $R'$ are the same as the remaining cases we have left to look at for path $R$ (which currently runs from $P_1$ to $P_3$). So we have shown that if case 13 holds, then we can construct another path (namely $R'$) which can play the role of $R$ and puts us into one of the other cases in the above list.

Thus, we will \textbf{assume one of cases 2b, 6b, 7b, 7c, 7d, 10b, 22b holds}, and return to considering these remaining cases for $R$ --- but noting that some other path meeting the criteria for $R$ with endpoints on $P_1$ and $P_3$ may or may not also exist.

Without loss of generality (due to symmetry of the graph), suppose that one of cases 6b, 7b, 7d, or 10b hold. (Cases 2b, 7c, and 22b are symmetrically equivalent to 6b, 7b, and 10b respectively.) In each of these cases, $R$ has an endpoint in $p_1R_2v_3$. Call this endpoint $r_1$. Let $r_1$ be chosen to minimise the distance along $R_2$ between $r_1$ and $p_1$.

If there exists, in addition to $R$, another path that falls into one of the `symmetrically equivalent' cases 2b, 7c, or 22b, then a $K_5^-$-subdivision exists. (There are twelve possible graphs to consider---with four possible placements of $R$, and three possible placements of the additional path.) Assume then that such a path does not exist. If in addition to $R$, there exists a path that falls into case 13 (i.e., a path from $P_1$ to $P_3$), then a $W_4$-subdivision centred on $v$ exists such that $H$ is no longer short. Figure \ref{case-ci-13-2} shows the four possible graphs, with the four possible placements of $R$ in blue, and the new $W_4$-subdivision in bold.

\begin{figure}[H]
\begin{center}
\includegraphics[width=\textwidth]{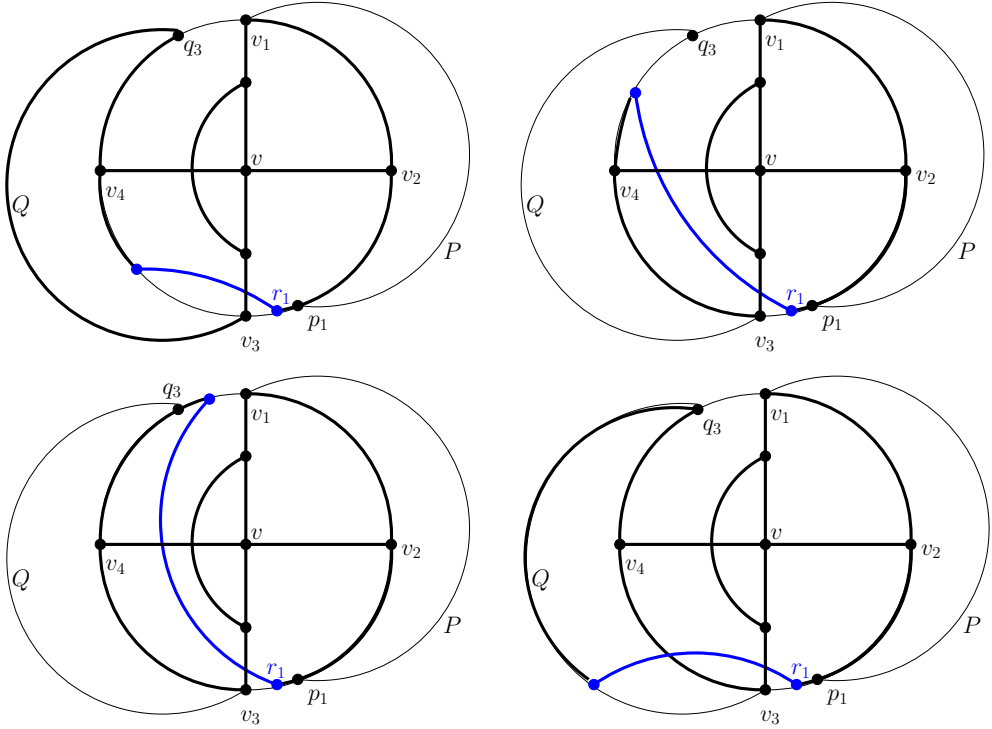}
\caption[Case (c)(i). Path $R$ with endpoint on $p_1R_2v_3$; $W_4$-subdivisions with shorter spokes than $H$.]{Case (c)(i). Path $R$ with endpoint on $p_1R_2v_3$; additional path from $P_1$ to $P_3$ creates a $W_4$-subdivision with shorter spokes than $H$.}
\label{case-ci-13-2}
\end{center}
\end{figure}

We have now dealt with the possibility of other paths (as well as $R$) satisfying the same criteria as $R$ and falling within cases 2b, 7c, 13, and 22b.

Thus, we may assume that \emph{any} path from $G_1$ to $G_2$ that meets the criteria for $R$ falls within one of cases 6b, 7b, 7d, 10b, so it meets $G_1$ \emph{only} on the path $p_1R_2v_3$. Of all these paths, $R$ is chosen such that its endpoint $r_1$ in $G_1$ is closest to $p_1$ along $R_2$.

If there is no path from $r_1R_2v_3 - r_1$ to $G_1 - P_1 - r_1R_2v_3$ that is internally disjoint from $H\cup P\cup Q$, then the graph can be disconnected with the removal of $r_1$, $v$, and $v_1$, violating 4-connectivity. Assume then that such a path exists. Call this path $R'$.

$R'$ cannot meet $P$ internally, due to the choice of $P$ minimising the distance between $p_1$ and $v_3$ (as stated at the beginning of Case (c)). $R'$ cannot meet $P_2$ internally, or a new $W_4$-subdivision can be constructed such that $H$ is no longer short. Suppose all possible choices of $R'$ meet $H$ only on $r_1R_2p_1 - r_1$. Then we can construct a new segment of rim $R'_2$ (from parts of $R_2$ and some choice of path $R'$) and a new path $R$ (from the old path $R$, and parts of $R_2$), such that the graph can be disconnected as in the previous paragraph. Thus, there must exist some $R'$ such that its endpoint in $G_1 - P_1 - r_1R_2v_3$ is on $R_1$ or $v_2R_2p_1$.

There are eight possible graphs to consider (with four possible placements of $R$, and two of $R'$), and each resulting graph contains a $K_5^-$-subdivision, \emph{except} for the two cases where $R$ meets $G_2$ on the path $v_1R_4q_3$. Assume then that this is the case for \emph{all} paths joining $G_1$ to $G_2$ that are disjoint from $\{v_1, v, v_3\}$. Let $r_2$ be a vertex on $v_1R_4q_3$ that forms the endpoint of some such path $R$. By a similar argument to the previous paragraph, there must be some path joining $r_2R_4v_1$ to either $R_3$ or $v_4R_4q_3$. Such a path in conjunction with $R$ and $R'$ (regardless of the placement of each path --- this time there are four graphs to consider, with two possible placements of $R'$, and two of the new path) results in a $K_5^-$-subdivision.

\vspace{0.1in}
\noindent \textbf{(ii) $U_3$ only has vertices of attachment on $p_1R_2v_3$, $R_3$, or $P_3$}

Suppose $U_3$ only has vertices of attachment on $p_1R_2v_3$, $R_3$, or $P_3$.

Let $G_1 = p_1R_2v_3\cup R_3\cup P_3$. Let $G_2 = (H\cup P) - G_1$. Note that throughout this section of the proof, $U_3$'s vertices of attachment are all contained in $G_1$.

\textbf{1.} Suppose $U_3$ has some vertex of attachment $q_3$ on $P_3$ \emph{other than} $v_3$. Let $Q_3$ be a path from $v_3$ to $q_3$, contained in $U_3$, that does not meet $H\cup P$ internally.

Let $\mathcal{P}_{a,b}$ be the set of all pairs of internally disjoint paths $\{P_a, P_b\}$, such that:

\begin{itemize}
\item $P_a$, $P_b$ have shared endpoints, one of which is $v_3$, and the other some vertex on $P_3 - v_3$;
\item neither $P_a$ nor $P_b$ meet $(H\cup P)-P_3$ internally. (They may, however, interact with $P_3$.)
\end{itemize}

Note that $\{Q_3, v_3P_3q_3\}$ is one such pair of paths belonging to $\mathcal{P}_{a,b}$, so the set is non-empty.

Let $\{P_x, P_y\}$ be a pair of paths belonging to $\mathcal{P}_{a,b}$ with endpoints $v_3$ and $x$, chosen so that the distance between $v$ and $x$ along $P_3$ is minimised. See Figure \ref{case-cii-1}. ($G_1$ is shown in blue, and $G_2$ in red. Note that the path $v_3P_3x$, although part of $G_1$, is not shown here: it may or may not interact with $P_x$ and $P_y$ internally.)

\begin{figure}[H]
\begin{center}
\includegraphics[width=0.5\textwidth]{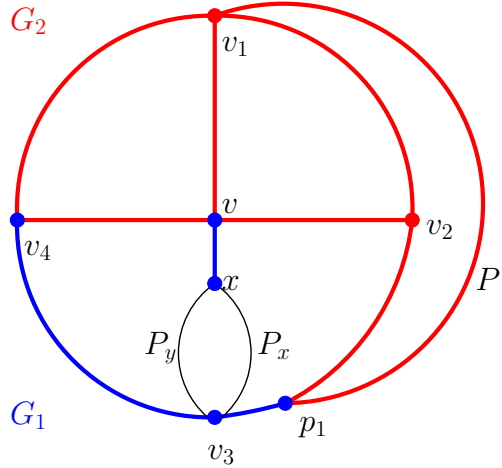}
\caption[Case (c)(ii)1. $G_1$, $G_2$, $P_x$, and $P_y$.]{Case (c)(ii)1. $G_1$, $G_2$, $P_x$, and $P_y$.}
\label{case-cii-1}
\end{center}
\end{figure}

\begin{claim}
\label{claim_cii1_R}
If there exists a path $R$ in $G$ from $(P_x\cup P_y) - \{v_3, x\}$ to somewhere in $G_2$ that is internally disjoint from $(H - v_3P_3x)\cup P\cup P_x\cup P_y$, then a $K_5^-$-subdivision exists in $G$.
\end{claim}

\begin{proof}
Suppose such a path $R$ exists. Let $r_1$ be the endpoint of $R$ that lies on $(P_x\cup P_y) - \{v_3, x\}$. Without loss of generality, suppose $r_1$ lies internally on $P_y$.

Let $P'_3$ be the path from $v_3$ to $v$ formed from $vP_3x$ and $P_x$. $P'_3$ does not meet any internal vertex of $P_y$.

Let $Q'$ be the path from $v_3$ to somewhere in $G_2$ formed from $v_3P_yr_1$ and $R$. This path does not meet $P'_3$ except at $v_3$. If $Q'$'s endpoint in $G_2$ lies on $P_1 - v$, $R_1$, $v_2R_2p_1 - p_1$, or $P - p_1$, then a $K_{5}^{-}$-subdivision can be formed, in the same way as described in the first few paragraphs of Case (c) and illustrated in Figure \ref{case-c-1}. If $Q'$'s endpoint in $G_2$ is on $P_2$ or $P_4$, then there exists a $W_4$-subdivision centred on $v$ (with $x$ and $r_2$ as two of its spoke-meets-rim vertices) such that $H$ is no longer short. Figure \ref{case-cii-R1} shows how this $W_4$-subdivision would be formed.

\begin{figure}[H]
\begin{center}
\includegraphics[width=\textwidth]{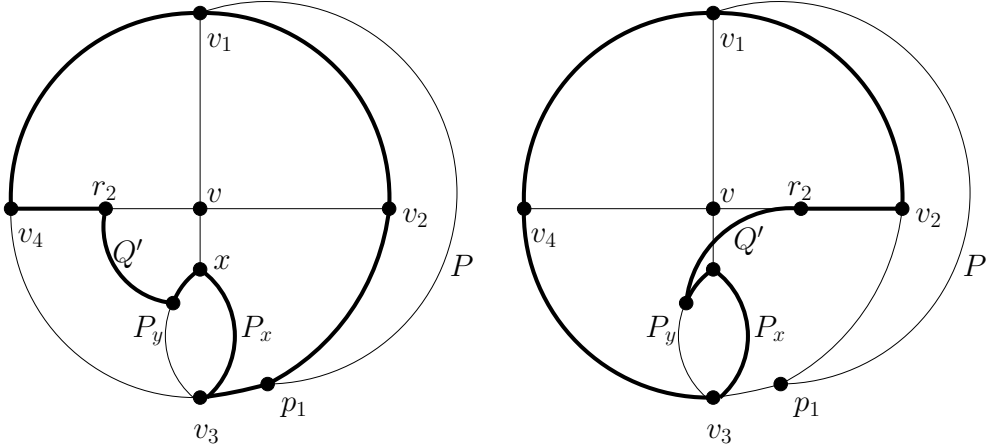}
\caption[Case (c)(ii)1, Claim \ref{claim_cii1_R}.]{Case (c)(ii)1, Claim \ref{claim_cii1_R}: $Q'$ goes to $P_2$ or $P_4$, violating shortness of $H$. The shorter $W_4$-subdivision has its rim shown in bold.}
\label{case-cii-R1}
\end{center}
\end{figure}

It remains to consider the possibility that $Q'$'s endpoint in $G_2$ is an internal vertex of $R_4$. See Figure \ref{case-cii-2}.

\begin{figure}[H]
\begin{center}
\includegraphics[width=0.5\textwidth]{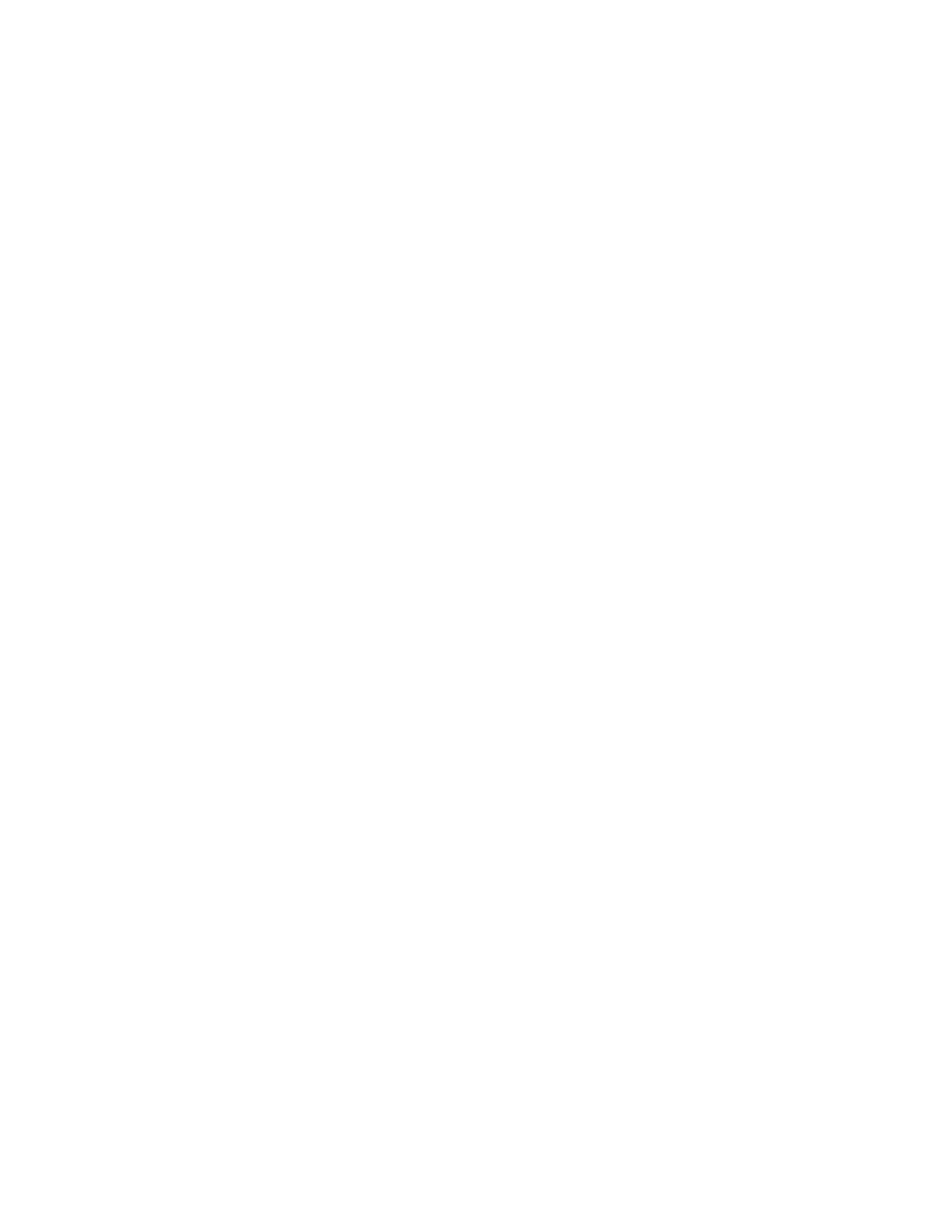}
\caption[Case (c)(ii)1. Paths $P'_3$ and $Q'$ in Claim \ref{claim_cii1_R}.]{Case (c)(ii)1. Paths $P'_3$ and $Q'$ in Claim \ref{claim_cii1_R}.}
\label{case-cii-2}
\end{center}
\end{figure}

Let $H'$ be a $W_4$-subdivision that coincides with $H$ except on $P_3$, and has the spoke $P'_3$ instead of $P_3$.

We wish to appeal to Case (c)(i), with $W_4$-subdivision $H'$ and path $Q'$ playing the roles of $H$ and $Q$ respectively. But the argument used there relies on $H$ being short. In the present situation, we do not yet know that $H'$ is short.

Suppose $H'$ is not short. Then there exists another $W_4$-subdivision, $H''$, centred on $v$, whose spokes are initial segments of the spokes of $H'$, and at least one is proper. Note that the rim of $H''$ may encounter its spokes in a different cyclic order to that used by the rim of $H$.

Consider $P''_3$, the spoke of $H''$ that is an initial segment $vP'_3y$ of $P'_3$, where $y$ is the spoke-meets-rim vertex of $H''$ that lies on $P'_3 - v$. There must exist two internally-disjoint paths $R'_2$ and $R'_3$ that form two segments of the rim of $H''$, such that these paths go from $y$ to two distinct spokes of $H'$ other than $P'_3$, that is, two of $P_1$, $P_2$, and $P_4$. One of these paths, then --- assume $R'_2$, without loss of generality --- meets either $P_2$ or $P_4$.  Let $z$ be the point closest to $y$ along $R'_2$ (other than $y$) at which $R'_2$ first meets $H$, and let $S'_2$ be the subpath of $R'_2$ from $y$ to $z$. See Figure \ref{case-cii-3} for one possible configuration, where $R'_2$ does not meet $H$ internally, and so $S'_2 = R'_2$.

\begin{figure}[H]
\begin{center}
\includegraphics[width=0.5\textwidth]{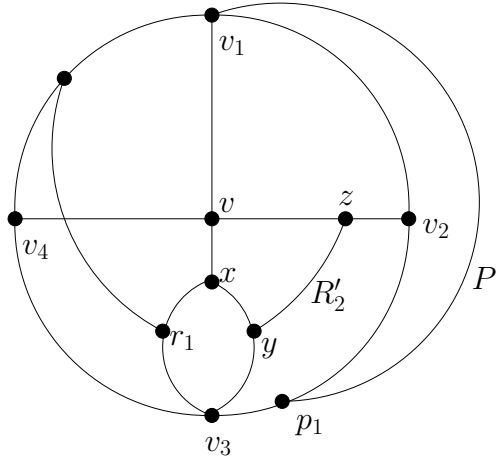}
\caption[Case (c)(ii)1. Claim \ref{claim_cii1_R}: $y$ and $R'_2$.]{Case (c)(ii)1. Claim \ref{claim_cii1_R}: a possible configuration of $y$ and $R'_2$.}
\label{case-cii-3}
\end{center}
\end{figure}

Recall $P'_3$ is composed of two subpaths, $vP_3x$ and $P_x$. Suppose $y$ lies on $vP_3x$ or internally on $P_x$.  Figure \ref{case-cii-3} shows one instance of this, and in this situation it is clear that a $W_4$-subdivision is constructed which violates the shortness of $H$. The rim of the forbidden $W_4$-subdivision coincides with that of $H$, except on $R_2$, where instead the rim is formed by the path $v_3P_yxP_xyR'_2zP_2v_2$.

However, the vertex $z$ may lie on a rim segment of $H$, if $S'_2 \neq R'_2$. If $z$ lies on $R_1$, then a $K_5^-$-subdivision exists. If $z$ lies on $R_2$, $R_3$, or $R_4$, then again, a $W_4$-subdivision is constructed which violates the shortness of $H$. Figure \ref{case-cii-3a} illustrates each of these possibilities.

\begin{figure}[H]
\begin{center}
\includegraphics[width=\textwidth]{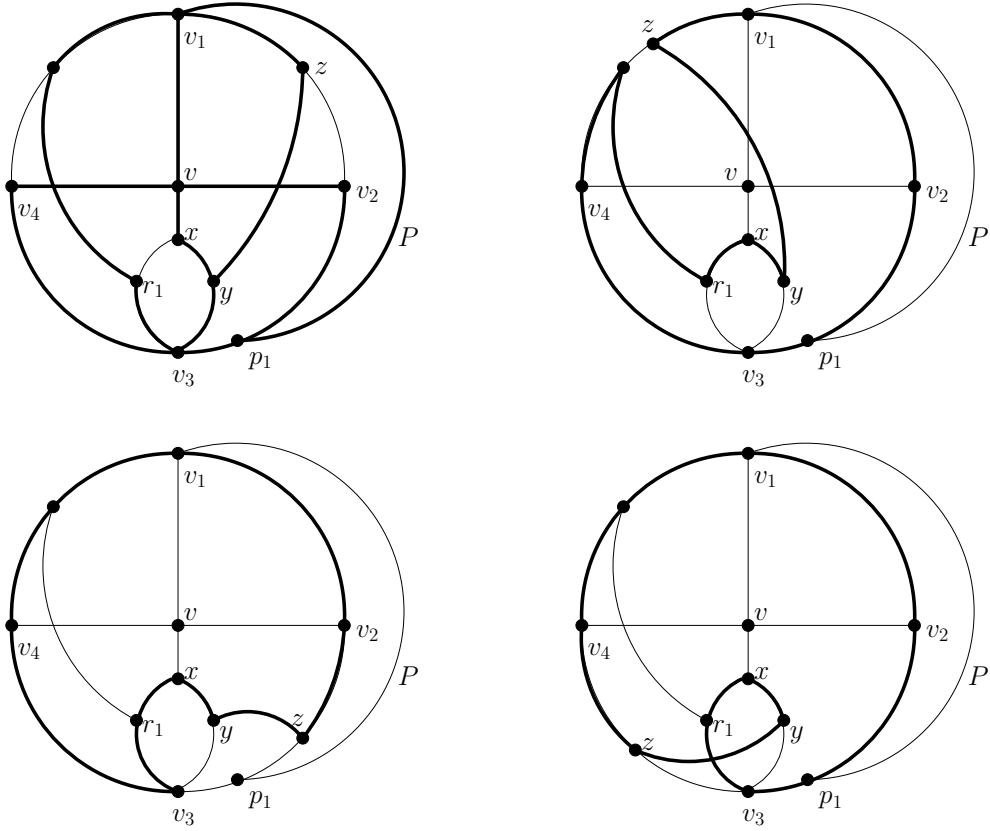}
\caption[Case (c)(ii)1. Claim \ref{claim_cii1_R}: $S'_2 \neq R'_2$.]{Case (c)(ii)1. Claim \ref{claim_cii1_R}: four possible configurations where $z$ lies on a rim segment of $H$. The first figure shows a $K_5^-$-subdivision in bold; the other three figures show the rim of a forbidden $W_4$-subdivision in bold.}
\label{case-cii-3a}
\end{center}
\end{figure}

Assume, then, that $y = v_3$. Then $P''_3$ is identical to $P'_3$, so there must be some other spoke of $H''$ which is shorter than its corresponding spoke in $H'$. In other words, at least one of $H''$'s spokes, $P''_i$, is a proper initial segment of some path  $P_i \in \{P_1, P_2, P_4\}$ that is a spoke of $H$ and of $H'$, so that the spoke-meets-rim vertex $v''_i$ of $P''_i$ lies internally on $P_i$. Let $R''_i$ be one of the segments of $H''$'s rim which has $v''_i$ as one of its endpoints. There is a subpath $S''_i$ of $R''_i$ from $v''_i$ to some vertex $w''_i$ in $H\cup P$, such that $S''_i$ does not meet $H\cup P$ internally.

Suppose firstly that $i \in \{2, 4\}$, that is, $v''_i$ lies internally on either $P_2$ or $P_4$. Regardless of where $w''_i$ lies in $H\cup P$, a $W_4$-subdivision can be formed which violates the shortness of $H$. The more complex of these cases are illustrated in Figure \ref{case-cii-4}.

\begin{figure}[H]
\begin{center}
\includegraphics[width=\textwidth]{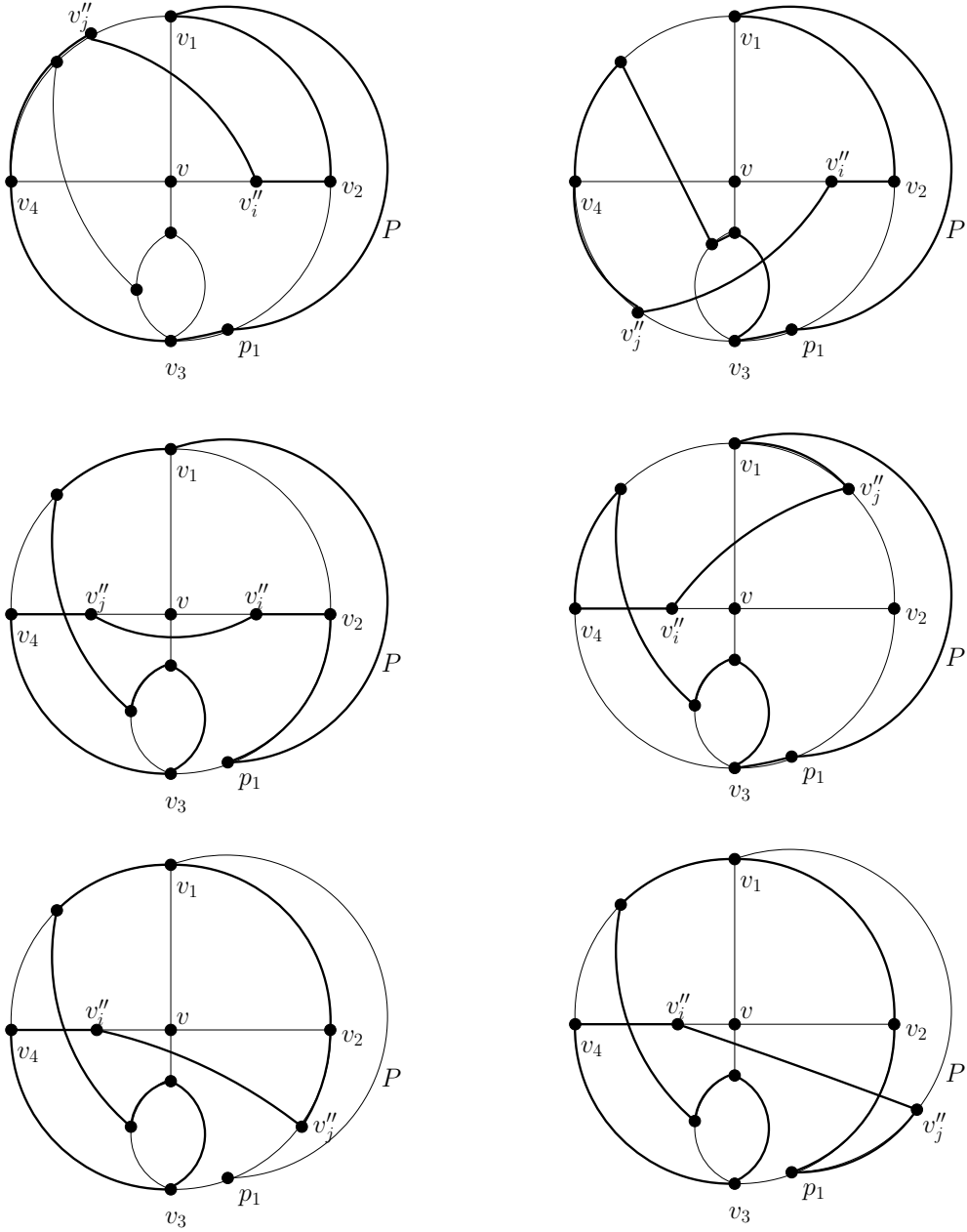}
\caption[Case (c)(ii)1. Claim \ref{claim_cii1_R}: $v''_i$ on $P_2$ or $P_4$]{Case (c)(ii)1. Claim \ref{claim_cii1_R}: $S''_i$ from $P_2 \cup P_4$ to some other part of $H\cup P$. The rim of the new short $W_4$-subdivision is shown in bold.}
\label{case-cii-4}
\end{center}
\end{figure}

Assume then that the spokes of $H''$ that lie along $P_2$ and $P_4$ are \emph{not} proper initial segments, that is, $H''$ has spoke-meets-rim vertices at $v_2$ and $v_4$, as well as at $v_3$. Thus, we can assume that $i = 1$, that is, $P''_1$ forms a proper initial segment of $P_1$, and $v''_i$ lies internally on $P_1$. The other three spokes of $H''$ are $P_2$, $P'_3$, and $P_4$.

Suppose $w''_i$ lies on $v_3P_3x$ internally. (This is possible, since there may be parts of this path that do not coincide with $P'_3$.) Then a $K_5^-$-subdivision can be formed in $G$, as shown in Figure \ref{case-cii-5}. (The figure shows $P_y$ not interacting with $v_3P_3x$, but the situation would be no different if these paths did interact, since $Q'$ is not used. Note also that $S''_i$ cannot meet $P_x$, since $P_x$ forms part of a spoke of $H''$, while $S''_i$ forms part of the rim.)

\begin{figure}[H]
\begin{center}
\includegraphics[width=0.5\textwidth]{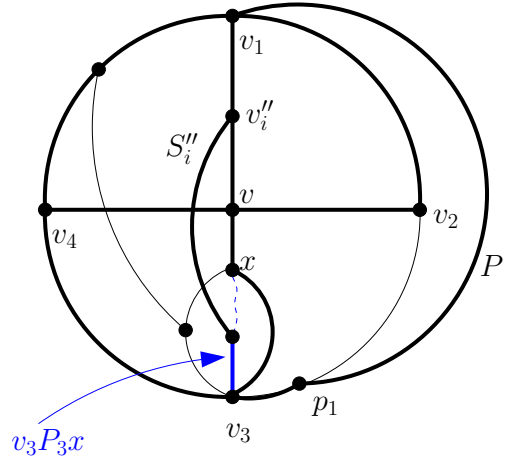}
\caption[Case (c)(ii)1. Claim \ref{claim_cii1_R}: $S''_i$ from $P_1$ to $P_3$.]{Case (c)(ii)1. Claim \ref{claim_cii1_R}: $S''_i$ from $P_1$ to $v_3P_3x$ creates a $K_5^-$-subdivision.}
\label{case-cii-5}
\end{center}
\end{figure}

Suppose then that $S''_i$ does not meet $v_3P_3x$ internally, and thus does not meet any of $H$'s spokes internally. Then there exists a $W_4$-subdivision whose spokes coincide with $H''$ except on $P'_3$, and has the spoke $P_3$ instead of $P'_3$, such that this $W_4$-subdivision violates the shortness of $H$.

Assume then that $H'$ is short. This means that $H'\cup P\cup Q'$ meets the requirements of the configuration addressed earlier in Case (c)(i) (where $U_3$ has vertices of attachment internally on $R_4$), where a $K_5^-$-subdivision is shown to exist in $G$.

This completes the proof of Claim \ref{claim_cii1_R}.
\renewcommand{\qedsymbol}{}
\end{proof}

Assume then that no such path $R$ exists.

If there exists a path from $(P_x\cup P_y) - \{v_3, x\}$ to $(p_1R_2v_3\cup R_3) - v_3$ that is internally disjoint from $(H\cup P) - P_3$, then a $W_4$-subdivision exists centred on $v$ such that $H$ is no longer short. Figure \ref{case-cii-endR} shows one possible configuration of this situation. Assume then that no such path exists. 

\begin{figure}[H]
\begin{center}
\includegraphics[width=0.5\textwidth]{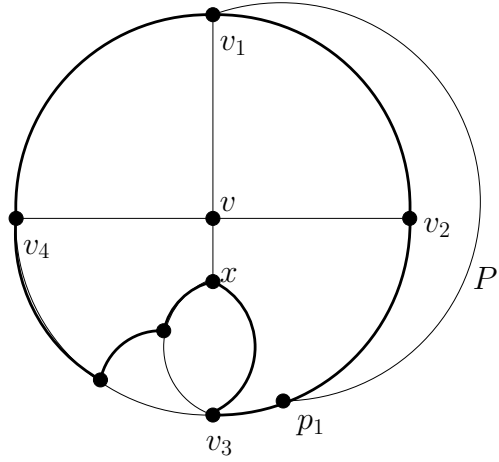}
\caption[Case (c)(ii)1. Path from $(P_x\cup P_y) - \{v_3, x\}$ to $(p_1R_2v_3\cup R_3) - v_3$.]{Case (c)(ii)1. Path from $(P_x\cup P_y) - \{v_3, x\}$ to $(p_1R_2v_3\cup R_3) - v_3$ creates a $W_4$-subdivision (rim shown in bold) which violates the shortness of $H$.}
\label{case-cii-endR}
\end{center}
\end{figure}

From the non-existence of path $R$ (Claim \ref{claim_cii1_R}) and the conclusion of the previous paragraph, we conclude that the removal of $v_3$ and $x$ will disconnect the graph, separating $(P_x\cup P_y)\setminus \{v_3, x\}$ from $(H\cup P) - P_3$.  Thus the 4-connectivity of $G$ is violated.

\textbf{2.} Assume then that $U_3$ has no vertices of attachment on $P_3 - v_3$, but only on $p_1R_2v_3$ or $R_3$.

Let $R'_3 = p_1R_2v_3\cup R_3$, that is, the section of the rim of $H$ from $p_1$ to $v_4$ that includes $v_3$.

Let $q_3$ be $U_3$'s vertex of attachment on $R'_3$ that lies closest to $v_4$ along $R'_3$, and let $Q_3$ be a path in $U_3$ from $v_3$ to $q_3$. Let $q_2$ be $U_3$'s vertex of attachment on $R'_3$ that lies closest to $p_1$ along $R'_3$, and let $Q_2$ be a path in $U_3$ from $v_3$ to $q_2$. Note that $q_3$ must lie on $R_3$ and $q_2$ must lie on $R_2$, since $v_3$ is already known to be a vertex of attachment of $U_3$ (although it is possible that either $q_3 = v_3$ or $q_2 = v_3$, but not both).

Let $\mathcal{R}_{a,b}$ be the set of all pairs of internally disjoint paths $\{R_a, R_b\}$, such that:

\begin{itemize}
\item $R_a$ and $R_b$ have shared endpoints, one of which is $v_3$, and the other some vertex on $v_3R_2p_1$;
\item neither $R_a$ nor $R_b$ meet $(H\cup P) - v_3R_2p_1$ internally (they may, however, interact with $v_3R_2p_1$); and
\item $\mathcal{R}_{a,b}$ excludes the pair of identical paths $\{v_3w, v_3w\}$, where $w$ is $v_3$'s neighbour on $R_2$.
\end{itemize}

This set includes the trivial pair of paths $\{v_3, v_3\}$, so is non-empty. Note that it also includes the pair of paths $\{Q_2, v_3R_2q_2\}$ (which may be the trivial pair of paths $\{v_3, v_3\}$, if $q_2 = v_3$).

Let $\{R_w, R_x\}$ be a pair of paths belonging to $\mathcal{R}_{a,b}$ with endpoints $v_3$ and $x$, chosen so that the distance between $p_1$ and $x$ along $v_3R_2p_1$ is minimised. Note that $x$ lies somewhere on $q_2R_2p_1$.

Suppose there exists a path $R$ from a vertex $r_1$ that lies in $(R_w\cup R_x) - \{v_3, x\}$, to a vertex $r_2$ in $G_2$, such that this path is internally disjoint from $(H - v_3R_2x)\cup P\cup R_w\cup R_x$. Without loss of generality, suppose $r_1$ lies internally on $R_w$.

Form a new $W_4$-subdivision $H'$ by replacing the part of $H$'s rim formed by $v_3R_2x$ with $R_x$. Note that since $H'$ has the same spokes as $H$, $H'$ must also be short.

Let $Q'$ be the path from $v_3$ to $G_2$ formed from $R$ and $v_3R_wr_1$. This path is internally disjoint from $H'\cup P$. Let $U'_3$ be the $H'\cup P$-bridge of $G$ containing $Q'$. If $U'_3$ has a vertex of attachment on $P_1 - v$, $R_1$, $v_2R_2p_1 - p_1$, or $P - p_1$, then a $K_{5}^{-}$-subdivision can be formed, as described at the beginning of Case (c). If $U'_3$ has a vertex of attachment on $P_2$ or $P_4$, then there exists a $W_4$-subdivision in $G$ centred on $v$ such that $H'$ is no longer short, which is a contradiction. So again, the only remaining place in $G_2$ where $U'_3$ can have a vertex of attachment is internally on $R_4$. The graph then meets the requirements of the configuration addressed earlier in Case (c)(i) (where $U_3$ has vertices of attachment internally on $R_4$).

Suppose then that no such path $R$ exists.

Let $\mathcal{R}_{c,d}$ be the set of all pairs of internally disjoint paths $\{R_c, R_d\}$, such that:

\begin{itemize}
\item $R_c$ and $R_d$ have shared endpoints, one of which is $v_3$, and the other some vertex on $R_3$; 
\item neither $R_c$ nor $R_d$ meet $(H\cup P) - R_3$ internally (they may, however, interact with $R_3$); and
\item $\mathcal{R}_{c,d}$ excludes the pair of identical paths $\{v_3w, v_3w\}$, where $w$ is $v_3$'s neighbour on $R_3$.
\end{itemize}

This set includes the trivial pair of paths $\{v_3, v_3\}$, so is non-empty. Note that it also includes the pair of paths $\{Q_3, v_3R_3q_3\}$ (which may be the trivial pair of paths $\{v_3, v_3\}$, if $q_3 = v_3$).

Let $\{R_y, R_z\}$ be a pair of paths belonging to $\mathcal{R}_{c,d}$ with endpoints $v_3$ and $y$, chosen so that the distance between $v_4$ and $y$ along $R_3$ is minimised.

Suppose now that there exists a path $R$ from a vertex $r_1$ that lies in $(R_y\cup R_z) - \{v_3, y\}$, to a vertex $r_2$ in $G_2$, such that this path is internally disjoint from $(H - v_3R_3y)\cup P\cup R_y\cup R_z$. Without loss of generality, suppose $r_1$ lies internally on $R_z$.

Form a new $W_4$-subdivision $H'$ by replacing the part of $H$'s rim formed by $v_3R_3y$ with $R_y$. Note that since $H'$ has the same spokes as $H$, $H'$ must also be short.

Let $Q'$ be the path from $v_3$ to $G_2$ formed from $R$ and $v_3R_zr_1$. This path is internally disjoint from $H'\cup P$. Let $U'_3$ be the $H'\cup P$-bridge of $G$ containing $Q'$. If $U'_3$ has a vertex of attachment on $P_1 - v$, $R_1$, $v_2R_2p_1 - p_1$, or $P - p_1$, then a $K_{5}^{-}$-subdivision can be formed, as described at the beginning of Case (c). If $U'_3$ has a vertex of attachment on $P_2$ or $P_4$, then there exists a $W_4$-subdivision in $G$ centred on $v$ such that $H'$ is no longer short, which is a contradiction. So again, the only remaining place in $G_2$ where $U'_3$ can have a vertex of attachment is internally on $R_4$. The graph then meets the requirements of the configuration addressed earlier in Case (c)(i) (where $U_3$ has vertices of attachment internally on $R_4$).

Suppose then that no such path $R$ exists.

If there exists a path from an internal vertex of $P_3$ to $R'_3 - v_3$ that is internally disjoint from $H\cup P$, then a $W_4$-subdivision exists such that $H$ is no longer short. Assume then that no such path exists.

Then the removal of $v_3$, $x$, and $y$ (at least two of which must be distinct vertices) will disconnect the graph, separating $(R_w\cup R_x)\setminus \{v_3, x\}$ and $(R_y\cup R_z)\setminus \{v_3, y\}$ (at least one of which must be non-empty) from $(H\cup P) - R'_3$. This contradicts the 4-connectivity of $G$.


\vspace{0.1in}
\noindent \textbf{Case (d): $p_1$ is an internal vertex of $P_3$.}

Suppose $p_1$ is an internal vertex of $P_3$. Without loss of generality, assume that $P$ is chosen to minimise the distance between $p_1$ and $v_3$ along $P_3$.

By the 4-connectivity of $G$, there must exist some fourth neighbour $u_3$ of $v_3$, such that $u_3 \notin N_{H}(v_3)$. Let $U_3$ be the $(H\cup P)$-bridge of $G$ containing the edge $v_3u_3$.

If $U_3$ has some vertex of attachment $u$, such that $u \in \{v, v_1\}$, or $u$ is an internal vertex of $P$ or $vP_{3}p_{1}$, then there exists some path $Q$ from $v_3$ to $u$ that is internally disjoint from $H\cup P$. A $K_{5}^{-}$-subdivision exists in $G$ in each of these cases. Assume then that $U_3$ has no such vertex of attachment $u$.

If $U_3$ has some vertex of attachment $u$ such that $u$ is an internal vertex of $R_1$ or $R_4$, then $G$ contains a graph that is symmetrically equivalent to that of the previous case (Case (c)). Assume then that $U_3$ has no such vertex of attachment $u$.

If $U_3$ has any vertices of attachment that are internal vertices of $P_2$ or $P_4$, then there exists some path from $v_3$ to an internal vertex of $P_2$ or $P_4$, such that this path is internally disjoint from $H$. This results in a new $W_4$-subdivision $H'$ centred on $v$, such that $H$ is no longer short. Assume then that $U_3$ contains no vertices of attachment that are internal vertices of $P_2$ or $P_4$.

It can be assumed, then, that $U_3$'s vertices of attachment lie only on the following paths:

\begin{itemize}
\item[(i)] $P_1$ (internally)
\item[(ii)] $R_2$, $R_3$, or $p_{1}P_{3}v_{3}$ (potentially at their endpoints)
\end{itemize}

These cases are considered below.

\vspace{0.1in}
\noindent \textbf{(i) $U_3$ has some internal vertex of $P_1$ as a vertex of attachment}

Suppose $U_3$ has some vertex of attachment $q_3$ that is an internal vertex of $P_1$. Thus, there exists some path $Q$ from $v_3$ to $q_3$ that meets $H\cup P$ only at its endpoints. See Figure \ref{case-di}.

\begin{figure}[H]
\begin{center}
\includegraphics[width=0.5\textwidth]{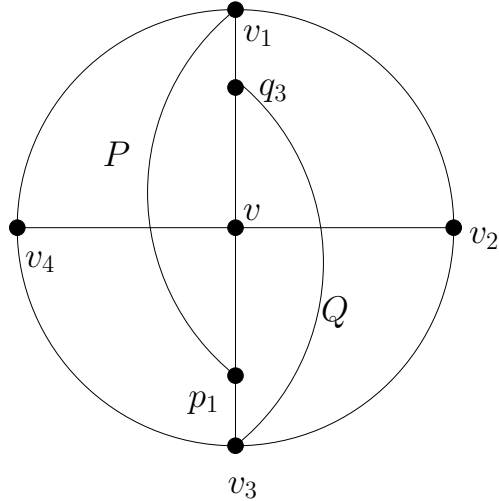}
\caption[Case (d)(i): Paths $P$ and $Q$.]{Case (d)(i). Path $Q$ from $v_3$ to $q_3$, where $q_3$ is an internal vertex of $P_1$.}
\label{case-di}
\end{center}
\end{figure}

Let $G_1 = R_1\cup R_2\cup P_2$. Let $G_2 = R_3\cup R_4\cup P_4$. By 4-connectivity of $G$, there must exist some path $R$ from $G_1$ to $G_2$ that is disjoint from $\{v_1, v, v_3\}$. If this path meets $P$, $Q$, $P_1$, or $P_3$, then checking each of the resulting graphs (which fall into nine isomorphism classes up to symmetry) shows that in all cases, a $W_4$-subdivision exists centred on $v$ such that $H$ is no longer short. An example of one such graph is shown in Figure \ref{case-di-short}.

\begin{figure}[H]
\begin{center}
\includegraphics[width=0.5\textwidth]{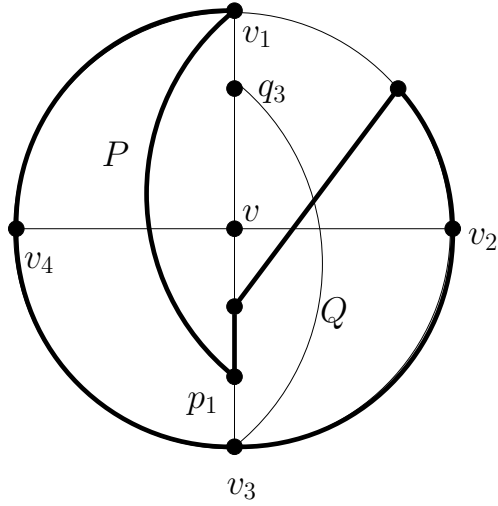}
\caption[Case (d)(i): $W_4$-subdivision with shorter spokes than $H$.]{Case (d)(i). Any path from $(G_1\cup G_2) \setminus \{v_1, v, v_3\}$ to $(P\cup Q\cup P_1\cup P_3)\setminus \{v_1, v, v_3\}$ results in a $W_4$-subdivision with shorter spokes than $H$. This example meets $P_3$. The rim of the new $W_4$-subdivision is shown in bold.}
\label{case-di-short}
\end{center}
\end{figure}

Assume then that $R$ does not meet $P$, $Q$, $P_1$, or $P_3$.

There are nine possible placements of the endpoints of $R$:

\begin{tabular}{l l l}
1. $R_1$ and $R_3$ & 2. $R_1$ and $R_4$ & 3. $R_1$ and $P_4$ \\
4. $R_2$ and $R_3$ & 5. $R_2$ and $R_4$ & 6. $R_2$ and $P_4$ \\
7. $P_2$ and $R_3$ & 8. $P_2$ and $R_4$ & 9. $P_2$ and $P_4$ \\
\end{tabular}

Table \ref{case-di-table} shows how each case number listed is assigned to each possible pairing of endpoints.

\begin{table}[H]
\begin{center}
\begin{tabular}{|p{0.6cm}||c|c|c|} \hline
\hfill $G_2$\newline $G_1$ & $R_3$ & $R_4$ & $P_4$ \\ \hline\hline
\textbf{$R_1$} & 1 & 2 & 3 \\ \hline
\textbf{$R_2$} & 4 & 5 & 6 \\ \hline
\textbf{$P_2$} & 7 & 8 & 9 \\ \hline
\end{tabular}
\caption[Case (d)(i). Table showing possible endpoints of $R$.]{Case (d)(i). Table showing assigned case numbers of all possible endpoints of $R$.}
\label{case-di-table}
\end{center}
\end{table}

In cases 1 and 5, a $K_5^-$-subdivision exists in $G$. In cases 3, 6, 7, 8, and 9, a $W_4$-subdivision exists such that $H$ is no longer short. Cases 2 and 4 remain. In both these cases we may assume that neither endpoint of $R$ is $v_2$ or $v_4$, since such cases fall within other cases too.

Let $r_1$ be the closest vertex to $v_2$ along $R_1$ that forms an endpoint of such a path $R$ as described in case 2 above, if such a path $R$ exists.

Let $r_2$ be the closest vertex to $v_2$ along $R_2$ that forms an endpoint of such a path $R$ as described in case 4 above, if such a path $R$ exists.

Then the graph can be disconnected by the removal of $r_1$ (or $v_1$, if $r_1$ does not exist), $v$, and $r_2$ (or $v_3$, if $r_2$ does not exist), placing $v_2$ in a separate component from $G_2$ (noting that $v_2\notin\{r_1,r_2\}$), thus contradicting the 4-connectivity of $G$.

\vspace{0.1in}
\noindent \textbf{(ii) $U_3$'s vertices of attachment all lie on $R_2$, $R_3$, or $p_{1}P_{3}v_{3}$}

Suppose $U_3$ only has vertices of attachment on $R_2$, $R_3$, or $p_{1}P_{3}v_{3}$.

Let $G_1 = R_2\cup R_3\cup p_1P_3v_3$. Let $G_2 = (H\cup P) - G_1$. Note that throughout this section of the proof, $U_3$'s vertices of attachment are all contained in $G_1$.

\textbf{1.} Suppose $U_3$ has some vertex of attachment on $p_1P_3v_3$ other than $v_3$.

Let $q_3$ be $U_3$'s vertex of attachment that lies closest to $p_1$ along $P_3$.

Suppose there exists some path $R$ from some vertex $r_1$ on $v_3P_3q_3 - q_3$ to some vertex $r_2$ in $G_2$, such that $R$ is internally disjoint from $H\cup P$. If $r_2$ lies on $P - p_1$ or $vP_3p_1 - p_1$, then a $K_5^-$-subdivision exists in $G$. If $r_2$ lies internally on $P_2$, $P_4$, $R_1$, or $R_4$, then a $W_4$-subdivision exists centred on $v$ such that $H$ is no longer short. Assume, then, that $r_2$ lies internally on $P_1$.

Let $Q'$ be the path formed from $R$ and $v_3P_3r_1$. Let $P'_3$ be a path formed from $vP_3q_3$ and some path $P_x$ in $U_3$ from $q_3$ to $v_3$, such that $P'_3$ and $Q'$ are disjoint except at $v_3$. Let $H'$ be a $W_4$-subdivision that coincides with $H$ except on $P_3$, and has the spoke $P'_3$ instead of $P_3$.

Suppose $H'$ is not short. Then there exists another $W_4$-subdivision, $H''$, centred on $v$, whose spokes are initial segments of the spokes of $H'$, with at least one of these initial segments being proper. Let $P''_3$ be the spoke of $H''$ that is an initial segment $vP'_3y$ of $P'_3$, where $y$ is the spoke-meets-rim vertex of $H''$ that lies on $P'_3$. There must exist two internally-disjoint paths $R'_2$ and $R'_3$ from $y$ to two of the paths $P_1$, $P_2$ and $P_4$, that form two segments of the rim of $H''$. Thus, at least one of these rim segments --- assume $R'_2$ without loss of generality --- meets either $P_2$ or $P_4$. Recall $P'_3$ is composed of two subpaths, $vP_3q_3$ and $P_x$. If $y$ lies on $vP_3q_3$, then $H''$ violates the shortness of $H$. Thus, $y$ must lie on the path $P_x - q_3$. But $P_x$ is contained in the $(H\cup P)$-bridge $U_3$, which only has vertices of attachment on $R_2$, $R_3$, or $p_{1}P_{3}v_{3}$. Since $R'_2$ meets either $P_2$ or $P_4$, it cannot meet $P_x$ internally, or $U_3$ would also contain vertices of attachment on either $P_2$ or $P_4$.

Assume then that $y = v_3$. Then $P''_3$ is identical to $P'_3$ so there must be some other spoke $P''_i$ of $H''$ which is a  \emph{proper} initial segment of another spoke $P_i$ of $H'$, where $i \in \{1, 2, 4\}$, so that some spoke-meets-rim vertex $v''_i$ of $P''_i$ lies internally on $P_i$. Let $R''_i$ be one of the segments of $H''$'s rim which has $v''_i$ as one of its endpoints. There is a subpath $S''_i$ of $R''_i$ from $v''_i$ to some vertex $w''_i$ in $H\cup P$, such that $S''_i$ does not meet $H\cup P$ internally. Regardless of where $v''_i$ and $w''_i$ lie, the configuration results in either the existence of a $K_5^-$-subdivision, or a $W_4$-subdivision which violates the shortness of $H$.

Assume, then, that $H'$ is short. Then we can use similar arguments to those given in Case (d)(i) above, but replacing $H$ and $Q$ with $H'$ and $Q'$ respectively.

Assume then that no such path $R$ exists.

If there exists a path from $v_3P_3q_3 - v_3$ to $(R_2\cup R_3) - v_3$ that is internally disjoint from $H\cup P$, then a $W_4$-subdivision exists such that $H$ is no longer short. Assume then that no such path exists. (Thus, $U_3$'s vertices of attachment must \emph{all} lie on $v_3P_3q_3$, since if some were on $(R_2\cup R_3) - v_3$, then there \emph{would} exist such a path, internally contained in $U_3$.)

Let $\mathcal{U}$ be the set of all $p_1P_3v_3$-bridges of $G$ except the one containing $R_2\cup R_3\cup G_2$. (Note that $U_3$ is one such bridge in $\mathcal{U}$.) Let $A$ be the set of all vertices of attachment of bridges in $\mathcal{U}$, and let $q'_3$ be the vertex in $A$ closest to $p_1$ along $P_3$.

The same arguments used for $U_3$ above can be used to show that there is no path from $v_3P_3q'_3 - q'_3$ to $G_2$ that is internally disjoint from $H\cup P$, and there is no path from $v_3P_3q'_3 - v_3$ to $(R_2\cup R_3) - v_3$ that is internally disjoint from $H\cup P$.

There cannot be a path from $v_3P_3q'_3 - q'_3$ to $q'_3P_3p_1$ that is internally disjoint from $p_1P_3v_3$, as this path would belong to some bridge in $\mathcal{U}$, which would contradict the choice of $q'_3$.

Thus, the removal of $v_3$ and $q'_3$ will disconnect the graph, placing the remaining vertices in $\mathcal{U}$ in a different component from the rest of $G$. Note that there must exist at least one such remaining vertex: either $u_3$ (in $U_3$) is a distinct vertex from $q'_3$, or $v_3P_3q'_3$ contains an internal vertex (otherwise there would be a double edge from $v_3$ to $q'_3$).

\textbf{2.} Assume then that $U_3$ has no vertices of attachment on $v_3P_3p_1 - v_3$, but only on $R_2$ or $R_3$.

Let $R_{2,3} = R_2\cup R_3$.

Let $q_2$ be $U_3$'s vertex of attachment that lies closest to $v_2$ along $R_{2,3}$, and let $Q_2$ be a path in $U_3$ from $v_3$ to $q_2$. Let $q_3$ be $U_3$'s vertex of attachment that lies closest to $v_4$ along $R_{2,3}$, and let $Q_3$ be a path in $U_3$ from $v_3$ to $q_3$. Note that, since $v_3$ is a vertex of attachment of $U_3$, it is possible that either $q_2 = v_3$ or $q_3 = v_3$ (but not both), and furthermore, we know that $q_2$ must lie somewhere on $R_2$, and $q_3$ must lie somewhere on $R_3$.

Let $\mathcal{R}_{a,b}$ be the set of all pairs of internally disjoint paths $\{R_a, R_b\}$ such that:

\begin{itemize}
\item $R_a, R_b$ have shared endpoints, one of which is $v_3$, and the other some internal vertex of $R_2$;
\item neither $R_a$ nor $R_b$ meet $(H\cup P) - R_2$ internally, however, they may interact with $R_2$; and
\item $\mathcal{R}_{a,b}$ excludes the pair of identical paths $\{v_3w, v_3w\}$, where $w$ is $v_3$'s neighbour on $R_2$.
\end{itemize}

This set includes the trivial pair of paths $\{v_3, v_3\}$, so is non-empty. Note that it also includes the pair of paths $\{v_3R_2q_2, Q_2\}$ (which may be the trivial pair of paths $\{v_3, v_3\}$, if $q_2 = v_3$).

Let $\{R_w, R_x\}$ be a pair of paths belonging to $\mathcal{R}_{a,b}$ with endpoints $v_3$ and $x$, chosen so that the distance between $v_2$ and $x$ along $R_2$ is minimised. Note that $x$ lies somewhere on $q_2R_2v_2$.

Suppose there exists a path $R$ from a vertex $r_1$ that lies on $(R_w\cup R_x) - \{v_3, x\}$ to a vertex $r_2$ in $G_2$, such that $R$ is internally disjoint from $(H - v_3R_2x)\cup P\cup R_w\cup R_x$. Without loss of generality, suppose $r_1$ lies internally on $R_w$.

Form a new $W_4$-subdivision, $H'$, by replacing the part of $H$'s rim formed by $v_3R_2x$ with $R_x$. Note that since $H'$ has the same spokes as $H$, $H'$ is also short.

Let $Q'$ be the path from $v_3$ to $G_2$ formed from $R$ and $v_3R_wr_1$. This path is internally disjoint from $H'\cup P$. Then $H' \cup P \cup Q'$ meets the requirements of one of the configurations addressed in either the first few paragraphs of Case (d), or in Case (d)(i). A $K_5^-$-subdivision has already been shown to exist under such conditions.

Assume then that such a path $R$ does not exist.

Now, define $\mathcal{R}_{c,d}$ as the set of all pairs of internally disjoint paths $\{R_c, R_d\}$ such that:

\begin{itemize}
\item $R_c, R_d$ have shared endpoints, one of which is $v_3$, and the other some internal vertex of $R_3$;
\item neither $R_c$ nor $R_d$ meet $(H\cup P) - R_3$ internally, however, they may interact with $R_3$; and
\item $\mathcal{R}_{c,d}$ excludes the pair of identical paths $\{v_3w, v_3w\}$, where $w$ is $v_3$'s neighbour on $R_3$.
\end{itemize}

This set includes the trivial pair of paths $\{v_3, v_3\}$, so is non-empty. Note that it also includes the pair of paths $\{v_3R_3q_3, Q_3\}$ (which may be the trivial pair of paths $\{v_3, v_3\}$, if $q_3 = v_3$).

Let $\{R_y, R_z\}$ be a pair of paths belonging to $\mathcal{R}_{c,d}$ with endpoints $v_3$ and $y$, chosen so that the distance between $v_4$ and $y$ along $R_3$ is minimised. Note that $y$ lies somewhere on $q_3R_3v_4$.

Suppose there exists a path $R$ from a vertex $r_1$ that lies on $(R_y\cup R_z) - \{v_3, y\}$ to a vertex $r_2$ in $G_2$, such that $R$ is internally disjoint from $(H - v_3R_3y)\cup P\cup R_y\cup R_z$. Without loss of generality, suppose $r_1$ lies internally on $R_y$.

Form a new $W_4$-subdivision, $H'$, by replacing the part of $H$'s rim formed by $v_3R_3y$ with $R_z$. Note that since $H'$ has the same spokes as $H$, $H'$ is also short.

Let $Q'$ be the path from $v_3$ to $G_2$ formed from $R$ and $v_3R_yr_1$. This path is internally disjoint from $H'\cup P$. Then $H' \cup P \cup Q'$ meets the requirements of one of the configurations addressed in either the first few paragraphs of Case (d), or in Case (d)(i). A $K_5^-$-subdivision has already been shown to exist under such conditions.

Assume then that such a path $R$ does not exist.

If there exists a path from $p_1P_3v_3 - v_3$ to $R_{2,3} - v_3$ that is internally disjoint from $H\cup P$, then a $W_4$-subdivision exists such that $H$ is no longer short. Assume then that no such path exists.

Then the removal of $v_3$, $x$, and $y$ (at least two of which must be distinct vertices) will disconnect the graph, separating $(R_w\cup R_x) \setminus \{v_3,x\}$ and $(R_y\cup R_z) \setminus \{v_3,y\}$ (at least one of which must be non-empty) from $(H\cup P) - R_{2,3}$. This contradicts the 4-connectivity of $G$.


\vspace{0.1in}
\noindent \textbf{Case (e): $p_1$ lies on $R_1$, $R_4$, or $P_1$.}

Let $U_1$ be the $H$-bridge of $G$ containing $P$. Let $G_1 = R_1\cup R_4\cup P_1$. Let $G_2 = H - G_1$. Note that $U_1$'s vertices of attachment are all contained in $G_1$.

\textbf{1.} Suppose $U_1$ has at least one vertex of attachment on $P_1$ other than $v_1$.

Let $q_1$ be $U_1$'s vertex of attachment that lies closest to $v$ along $P_1$. Let $P_x$ be some path contained in $U_1$ from $v_1$ to $q_1$ that does not meet $H$ internally.

Suppose there exists some path $R$, with endpoint $r_1$ on $v_1P_1q_1 - q_1$, and endpoint $r_2$ in $G_2$, such that $R$ is internally disjoint from $H$. (Note that $R$ must also be internally disjoint from the vertices in $U_1 - H$, since otherwise $R$ would be contained in $U_1$, and we already know that $U_1$ has no vertices of attachment in $G_2$.) Let $P'$ be the path from $v_1$ to $r_2$ formed from $v_1P_1r_1$ and $R$. Let $P'_1$ be a path from $v_1$ to $v$, formed from $P_x$ and $q_1P_1v$. Note that $P'$ and $P'_1$ are disjoint except at $v_1$, and do not meet $H - P_1$ internally.

Let $H'$ be a $W_4$-subdivision that coincides with $H$ except on $P_1$, and has the spoke $P'_1$ instead of $P_1$. Suppose $H'$ is not short. Then there exists another $W_4$-subdivision, $H''$, centred on $v$, with spokes that are initial segments of the spokes of $H'$, with at least one of these initial segments being proper. Let $P''_1$ be the spoke of $H''$ that is an initial segment $vP'_1y$ of $P'_1$, where $y$ is the spoke-meets-rim vertex of $H''$ that lies on $P'_1$. There must exist two internally-disjoint paths $R'_1$ and $R'_4$ from $y$ to two of the three paths $P_2$, $P_3$, and $P_4$, forming two segments of the rim of $H''$. Thus, at least one of these paths --- assume $R'_1$, without loss of generality --- goes from $y$ to either $P_2$ or $P_4$. Recall $P'_1$ is composed of two subpaths, $vP_1q_1$ and $P_x$. If $y$ lies on $vP_1q_1$, then $H''$ violates the shortness of $H$. Thus, $y$ must lie on the path $P_x$. But $P_x$ is contained in the $H$-bridge $U_1$, whose vertices of attachment are all contained in $G_1$. Since $R'_1$ meets $G_2$ (on either $P_2$ or $P_4$, internally), it cannot meet $P_x$ internally, or $U_1$ would also contain vertices of attachment in $G_2$.

Assume then that $y = v_1$. Then $P''_1$ is identical to $P'_1$, so there must be some other spoke $P''_i$ of $H''$ which is a  \emph{proper} initial spoke of another spoke $P_i$ of $H'$, where $i \in \{2, 3, 4\}$, so that a spoke-meets-rim vertex $v''_i$ of $P''_i$ lies internally on $P_i$. Let $R''_i$ be one of the segments of $H''$'s rim which has $v''_i$ as one of its endpoints. There is a subpath $S''_i$ of $R''_i$ from $v''_i$ to some vertex $w''_i$ in $H\cup P$, such that $S''_i$ does not meet $H\cup P$ internally. Regardless of where $v''_i$ and $w''_i$ lie, either the configuration results in a $W_4$-subdivision which violates the shortness of $H$, or in a graph that has already been considered in a previous case.

We can assume, then, that $H'$ is short, and so can use similar arguments to those already given in Cases (a)--(d), but using $H'$ and $P'$ in place of $H$ and $P$.

Assume then that no such path $R$ exists.

If there exists a path from $v_1P_1q_1 - v_1$ to $(R_1\cup R_4) - v_1$ that is internally disjoint from $H$, then a $W_4$-subdivision exists such that $H$ is no longer short. Assume then that no such path exists. Thus, $U_1$'s vertices of attachment all lie on $P_1$.

Let $\mathcal{U}$ be the set of all $P_1$-bridges of $G$ except the one containing $R_1\cup R_4\cup G_2$. (Note that we have now shown $U_1$ to be one such bridge in $\mathcal{U}$.) Let $A$ be the set of all vertices of attachment of bridges in $\mathcal{U}$, and let $q'_1$ be the vertex in $A$ closest to $v$ along $P_1$.

The same arguments used for $U_1$ above (earlier in this subcase) can be used to show that there is no path from $v_1P_1q'_1 - q'_1$ to $G_2$ that is internally disjoint from $H$, and there is no path from $v_1P_1q'_1 - v_1$ to $(R_1\cup R_4) - v_1$ that is internally disjoint from $H$.

There cannot be a path from $v_1P_1q'_1 - q'_1$ to $q'_1P_1v$ that is internally disjoint from $P_1$, as this path would belong to some bridge in $\mathcal{U}$, which would contradict the choice of $q'_1$.

Thus, the removal of $v_1$ and $q'_1$ will disconnect the graph, placing the vertices in $\mathcal{U}$ in a different component from the rest of $G$. Note that there must exist at least one such remaining vertex: either $u_1$ (in $U_1$) is distinct from $q'_1$, or $v_1P_1q'_1$ contains an internal vertex (otherwise there would be a double edge from $v_1$ to $q'_1$).

\textbf{2.} Assume then that $U_1$ has no vertices of attachment on $P_1 - v_1$, but only on $R_1$ or $R_4$.

Let $R_{1,4} = R_1\cup R_4$.

Let $q_1$ be $U_1$'s vertex of attachment closest to $v_2$ along $R_{1,4}$, and let $Q_1$ be a path in $U_1$ from $v_1$ to $q_1$. Let $q_4$ be $U_1$'s vertex of attachment closest to $v_4$ along $R_{1,4}$, and let $Q_4$ be a path in $U_1$ from $v_1$ to $q_4$. Note that, since $v_1$ is a vertex of attachment of $U_1$, it is possible that either $q_1 = v_1$ or $q_4 = v_1$ (but not both), and furthermore, we know that $q_1$ must lie somewhere on $R_1$, and $q_4$ must lie somewhere on $R_4$.

Let $\mathcal{R}_{a,b}$ be the set of all pairs of internally disjoint paths $\{R_a, R_b\}$ such that:

\begin{itemize}
\item $R_a, R_b$ have shared endpoints, one of which is $v_1$, and the other some internal vertex of $R_1$;
\item neither $R_a$ nor $R_b$ meet $H - R_1$ internally, however, they may interact with $R_1$; and
\item $\mathcal{R}_{a,b}$ excludes the pair of identical paths $\{v_1w, v_1w\}$, where $w$ is $v_1$'s neighbour on $R_1$.
\end{itemize}

This set includes the trivial pair of paths $\{v_1, v_1\}$, so is non-empty. Note that it also includes the pair of paths $\{v_1R_1q_1, Q_1\}$ (which may be the trivial pair of paths $\{v_1 v_1\}$, if $q_1 = v_1$).

Let $\{R_w, R_x\}$ be a pair of paths belonging to $\mathcal{R}_{a,b}$ with endpoints $v_1$ and $x$, chosen so that the distance between $v_2$ and $x$ along $R_1$ is minimised. Note that $x$ lies somewhere on $q_1R_1v_2$.

Suppose there exists a path $R$ from a vertex $r_1$ that lies on $(R_w\cup R_x) - \{v_1, x\}$ to a vertex $r_2$ in $G_2$, such that $R$ is internally disjoint from $(H - v_1R_1x)\cup R_w\cup R_x$. Without loss of generality, suppose $r_1$ lies internally on $R_w$.

Form a new $W_4$-subdivision, $H'$, by replacing the part of $H$'s rim formed by $v_1R_1x$ with $R_x$. Note that since $H'$ has the same spokes as $H$, $H'$ is also short.

Let $P'$ be the path from $v_1$ to $G_2$ formed from $R$ and $v_1R_wr_1$. This path is internally disjoint from $H'$. Then $H' \cup P'$ meets the requirements of one of the configurations addressed in one of Cases (a), (b), (c), or (d), where a $K_5^-$-subdivision has already been shown to exist.

Assume then that such a path $R$ does not exist.

Now, define $\mathcal{R}_{c,d}$ as the set of all pairs of internally disjoint paths $\{R_c, R_d\}$ such that:

\begin{itemize}
\item $R_c, R_d$ have shared endpoints, one of which is $v_1$, and the other some internal vertex of $R_4$;
\item neither $R_c$ nor $R_d$ meet $H - R_4$ internally, however, they may interact with $R_4$; and
\item $\mathcal{R}_{c,d}$ excludes the pair of identical paths $\{v_1w, v_1w\}$, where $w$ is $v_1$'s neighbour on $R_4$.
\end{itemize}

This set includes the trivial pair of paths $\{v_1, v_1\}$, so is non-empty. It also includes the pair of paths $\{v_1R_4q_4, Q_4\}$ (which may be the trivial pair of paths $\{v_1, v_1\}$, if $q_1 = v_1$).

Let $\{R_y, R_z\}$ be a pair of paths belonging to $\mathcal{R}_{c,d}$ with endpoints $v_1$ and $y$, chosen so that the distance between $v_4$ and $y$ along $R_4$ is minimised. Note that $y$ lies somewhere on $q_4R_4v_4$.

Suppose there exists a path $R$ from a vertex $r_1$ that lies on $(R_y\cup R_z) - \{v_1, y\}$ to a vertex $r_2$ in $G_2$, such that $R$ is internally disjoint from $(H - v_3R_3y)\cup R_y\cup R_z$. Without loss of generality, suppose $r_1$ lies internally on $R_y$.

Form a new $W_4$-subdivision, $H'$, by replacing the part of $H$'s rim formed by $v_1R_4y$ with $R_z$. Since $H'$ has the same spokes as $H$, $H'$ is also short.

Let $P'$ be the path from $v_1$ to $G_2$ formed from $R$ and $v_1R_yr_1$. This path is internally disjoint from $H'$. Then $H' \cup P'$ meets the requirements of one of the configurations addressed in one of Cases (a), (b), (c), or (d), where a $K_5^-$-subdivision has already been shown to exist.

Assume then that such a path $R$ does not exist.

If there exists a path from $P_1 - v_1$ to $R_{1,4} - v_1$ that is internally disjoint from $H$, then a $W_4$-subdivision exists such that $H$ is no longer short. Assume then that no such path exists.

Then the removal of $v_1$, $x$, and $y$ (at least two of which must be distinct vertices) will disconnect the graph, separating $(R_w\cup R_x)\setminus \{v_1, x\}$ and $(R_y\cup R_z)\setminus \{v_1, y\}$ (at least one of which is non-empty) from $H - R_{1,4}$. This contradicts the 4-connectivity of $G$.

\end{proof}

\end{document}